\newtheorem{theorem}{\textbf{Theorem}}
\newtheorem{definition}{\textbf{Definition}}
\begin{document}
\title{Primary Channel Gain Estimation for Spectrum Sharing in Cognitive Radio Networks}

\author{\authorblockN{Lin Zhang, Guodong Zhao, Wenli Zhou, Liying Li, Gang Wu, \\Ying-Chang Liang, and Shaoqian Li}
\thanks{Lin Zhang, Guodong Zhao, Wenli Zhou, Gang Wu, Ying-Chang Liang, and Shaoqian Li are with the National Key Lab of Science and Technology on Communications, University of Electronic Science and Technology of China, Chengdu, China, emails: linzhang1913@gmail.com, gdngzhao@gmail.com, di$\_$di$\_$zhou@163.com, wugang99@uestc.edu.cn, liangyc@ieee.org, and lsq@uestc.edu.cn; Liying Li is with School of Automation, University of Electronic Science and Technology of China, Chengdu, China, email: liyingli0815@gmail.com; Guodong Zhao and Liying Li are also with Department of Electrical and Computer Engineering, Lehigh University, Bethlehem, PA, USA.}
}
\maketitle

\thispagestyle{empty}

\begin{abstract}
In cognitive radio networks, the channel gain between primary transceivers, namely, primary channel gain,
is crucial for a cognitive transmitter (CT) to control the transmit power and achieve spectrum sharing. Conventionally, the primary channel gain is estimated in the primary system and thus unavailable at the CT. To deal with this issue, two estimators are proposed by enabling the CT to sense primary signals. In particular, by adopting the maximum likelihood (ML) criterion to analyze the received primary signals, a ML estimator is first developed. After demonstrating the high computational complexity of the ML estimator, a median based (MB) estimator with proved low complexity is then proposed. Furthermore, the estimation accuracy of the MB estimation is theoretically characterized. By comparing the ML estimator and the MB estimator from the aspects of the computational complexity as well as the estimation accuracy, both advantages and disadvantages of two estimators are revealed. Numerical results show that the estimation errors of the ML estimator and the MB estimator can be as small as $0.6$ dB and $0.7$ dB, respectively.
\end{abstract}

\begin{keywords}
Cognitive radio, Channel Gain, Maximum likelihood, Median, Estimation.

\end{keywords}

\section{Introduction}

Cognitive radio technique is a promising candidate to deal with the spectrum
shortage problem in the wireless communication \cite{Haykin},
\cite{Y_C_Liang_overview}. By coexisting with primary users on the under-utilized licensed spectrum, cognitive users enhance the utilization efficiency of the spectrum meanwhile leverage the cognitive throughput. In general, cognitive users are able to coexist with primary users in two ways. One is
\emph{opportunistic spectrum access} (OSA) \cite{Q_Zhao} and the other is
\emph{spectrum sharing} (SS)
\cite{Cellular_SS}, \cite{5G_SS}. In OSA, cognitive users are allowed to access the spectrum only if the
spectrum is idle, and have to free the spectrum as soon as possible once the
spectrum is re-occupied. In SS, cognitive users are allowed to access the spectrum
even when the spectrum is occupied, provided that the co-channel
interference inflicted to the \emph{primary receiver} (PR) does not
violate a maximum interference power constraint, namely, \emph{interference temperature}. Therefore, compared with OSA, SS is able to exploit more spectrum opportunities and obtain higher spectrum utilization efficiency, both of which boost the cognitive throughput \cite{Capacity_ITC_1}, \cite{Capacity_ITC_2}, \cite{Capacity_ITC_3}.

In recent years, SS has been studied extensively
\cite{Interference_T_1}, \cite{Interference_T_2}, \cite{Interference_T_3}. In these literature, the
interference temperature of the primary system is usually assumed to be
known to the \emph{cognitive transmitter} (CT), such that the CT is able to explicitly control the transmit power and protect primary transmissions. However, to obtain the interference temperature, the CT needs a backhaul link from the primary system. Then, the primary system can transmit the information of the interference temperature to the CT. In practice, there may not exist any backhaul link between the two systems. Thus, it is challenging for the CT to obtain the interference temperature and achieve SS in such a situation.

In fact, the calculation of the interference temperature is highly related to the channel gain between primary transceivers, namely, primary channel gain. Specifically, within a \emph{quality of service} (QoS) guaranteed primary system, the \emph{primary transmitter} (PT) automatically adapts its transmit power to satisfy a target \emph{signal to interference plus noise ratio} (SINR or SNR) at the PR or equivalently a target transmission rate. A large primary channel gain means that the target QoS of a primary transmission can be easily satisfied, even when the transmit power of primary signals is small. Under the maximum transmit power constraint at the PT, the primary transmission is able to tolerate a strong interference signal meanwhile achieve the target QoS by increasing the transmit power of primary signals. This leads to a large interference temperature and contributes to a high cognitive throughput. On the contrary, if the primary channel gain is small, a large transmit power of primary signals is required to satisfy the target QoS. Then, to achieve the target QoS of the primary transmission, only a weak interference signal can be tolerated, even when the PT works with the maximum transmit power. This leads to a small interference temperature and reduces the cognitive throughput. Therefore, the primary channel gain is very important in calculating the interference temperature. (The detailed mathematical calculation of the interference temperature with the primary channel gain can be found in Appendix A.)

Conventionally, the primary channel gain is estimated in the primary system. In particular, the PT transmits a training signal to the PR through the primary channel. The PR extracts the information of the primary channel gain from the received training signal, and calculates the interference temperature. But, the CT cannot obtain the primary channel gain. To deal with this issue, we propose new methods for the CT to estimate the primary channel gain, such that the CT is able to calculate the interference temperature and achieve SS. We note that there exists an implicit relation between primary signals and the primary channel gain. As a consequence, it is possible for the CT to exploit the relation to estimate the primary channel gain. In principle, within a QoS guaranteed primary system, primary signals are carefully designed based on the primary channel gain. In particular, if the primary channel gain is strong, the PT is able to satisfy the target QoS with a low transmit power. Otherwise, the PT needs to increase its transmit power to compensate for the target QoS. In other words, primary signals contain some information of the primary channel gain. Thus, it becomes possible for the CT to obtain the primary channel gain.

In this paper, we develop two estimators, i.e., a high-complexity \emph{maximum likelihood} (ML) estimator and a low-complexity \emph{median based} (MB) estimator, for the CT to obtain the primary channel gain. Numerical results show that the estimation errors of the ML estimator and the MB estimator can reach as small as $0.6$ dB and $0.7$ dB, respectively. Meanwhile, the ML estimator outperforms the MB estimator in terms of the estimation error if the SNR of the received primary signals at the CT is no smaller than $4$ dB. Otherwise, the MB estimator is superior to the ML estimator from the aspect of both the computational complexity and the estimation accuracy.

To our best knowledge, this is the first work that considers unknown interference temperature at the CT and estimates the primary channel gain for SS. The main contributions of this paper are as follows:

{$\bullet$} By enabling the CT to sense primary signals and adopting the ML criterion to analyze the received primary signals, we develop a ML estimator for the CT to obtain the primary channel gain. In particular, the ML estimator is obtained by solving a nonlinear equation. To shed more light on the estimator design, we study the property of the nonlinear equation and develop a bisection method to solve it. In addition, we analyze the computational complexity of the ML estimator.

{$\bullet$} After demonstrating the high computational complexity of the ML estimator, we develop a MB estimator with proved low complexity. By denoting $K$ as the number of the received primary signals, we derive both lower and upper bounds of an estimation with a certain probability. In particular, the probability is expressed as a function of $K$ and monotonously increases as $K$ grows. We also study the performance limit of the MB estimator when $K$ grows to the infinity. Furthermore, we analyze the computational complexity of the MB estimator.

{$\bullet$} By comparing the ML estimator and the MB estimator from the aspects of the computational complexity as well as the estimation accuracy, both advantages and disadvantages of two estimators are revealed. Numerical results verify our theoretical analysis.

\section{System Model}
Fig. {\ref{Fig.1}} provides the system model, which consists of a PT, a PR, and a CT. In particular, the PT is transmitting data to the PR on a wireless channel. Meanwhile, the CT intends to estimate the primary channel gain between the PT and the PR for SS. In what follows, we will present the channel model and signal model in the considered system, respectively.


         \begin{figure}
            \centering
            \includegraphics[scale=0.6]{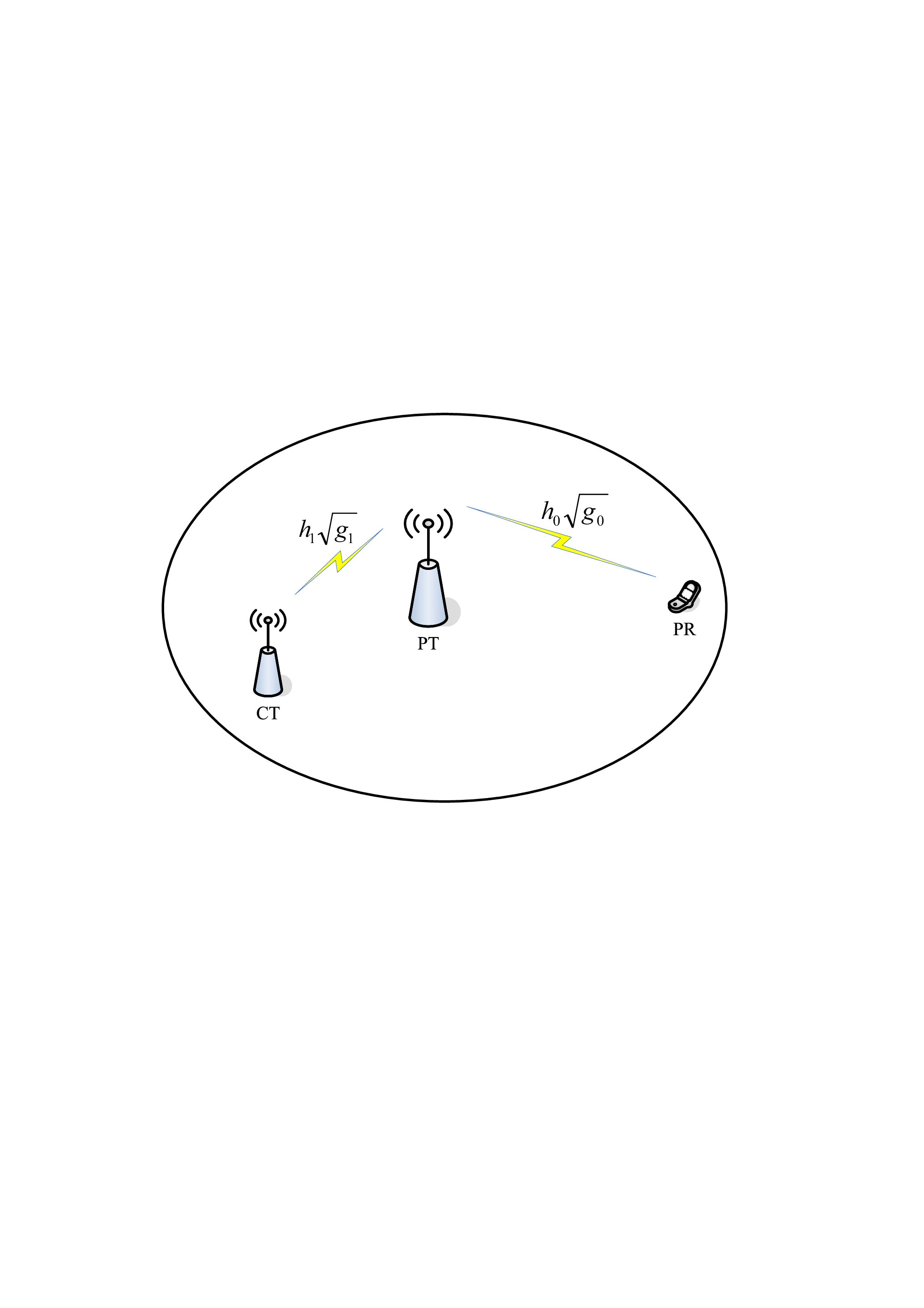}
            \caption{System model, which consists of a PT, a PR, and a CT. In particular, the PT is transmitting data to the PR on a wireless channel. Meanwhile, the CT intends to estimate the primary channel gain between the PT and the PR for SS.}
            \label{Fig.1}
        \end{figure}
\subsection{Channel Model}
Block fading channels are considered among three users. Specifically, if we denote $h_0$ ($h_1$) and $g_0$ ($g_1$) as the small-scale block fading and the large-scale channel gain coefficients between the PT and PR (CT), the channel between the PT and PR (CT) is $h_0\sqrt{g_0}$ ($h_1\sqrt{g_1}$). On one hand, $|h_i|$ ($i=0, \ 1$) follows a Rayleigh distribution with unit mean. $h_i$ ($i=0, \ 1$) remains constant within each block and varies independently among different blocks. On the other hand, $g_i$ ($i=0, \ 1$) is determined by the path-loss model. If we adopt the path loss model \cite{3GPP_channel_model}
\begin{equation}
    P_l(d_i)=128 + 37.6\log_{10}(d_i), \ \ \text{for} \ \ d_i \geq
    0.035 \ \text{km},
    \label{Path_loss_model}
\end{equation}
where $d_i$ (km) is the distance between two transceivers, the large-scale channel gain $g_i$ is
   \begin{equation}
    g_i=10^{-12.8}d^{-3.76}_i, \ \ \text{for} \ \ d_i \geq
    0.035 \ \text{km},
    \label{g_i}
\end{equation}
and remains constant all the time for a given distance $d_i$.

Thus, the CT needs to estimate the primary channel gain $g_0$ from the PT to PR for SS. Since we focus on the sensing phase for the CT to estimate the primary channel gain, we will not discuss the transmission phase in the rest of this paper.

\subsection{Signal Model}

  \subsubsection{Signal Model from the PT to PR}
Denote $x_{p}$ as the primary signal with unit power, i.e., $|x_{p}|^2=1$. If the PT transmits the primary signal with power $p_0$, the received signal at the PR in block $k$ is
 \begin{equation}
 y_p(k)= h_0(k)\sqrt {g_0p_0(k)} x_p(k) + n_p(k),
 \label{y_p}
 \end{equation}
where $n_p$ represents the AWGN at the PR with zero mean and variance $\sigma^2$. Then, the SNR of the received primary signal at the PR is
 \begin{equation}
\gamma_{p}(k)= \frac{|h_0(k)|^2g_0p_0(k)}{\sigma^2}.
 \label{gamma_p}
 \end{equation}

We further consider that the PT and PR adopt \emph{close loop
power control} (CLPC) to provide QoS guaranteed wireless communication \cite{LZ}, \cite{Rui}, \cite{Null_space_1}, \cite{Null_space_2}. That means, the PT automatically adjusts its
transmit power to meet a certain target SNR $\gamma_T$ at the PR.
Then, PT's transmit power satisfies
\begin{equation}
p_0(k)= \frac{{\gamma _T\sigma^2}}{{|h_0(k)|^2{g_0}}}.
\label{p_0}
\end{equation}

\subsubsection{Signal Model from the PT to CT}
In the meantime, the received primary signal at the CT in block $k$ is
  \begin{equation}
y_c(k) = h_1(k)\sqrt {{g_1}{p_0(k)}} {x_{p}}(k) + n_c(k),
 \label{y_c}
 \end{equation}
where $n_c$ is the AWGN at the CT with zero mean and variance $\sigma^2$. Then, the SNR of the received primary signal at the CT is
 \begin{equation}
\gamma_c(k) = \frac{{|h_1(k)|^2{g_1}{p_{0}(k)}}}{{\sigma^2}}.
 \label{gamma_c}
 \end{equation}

By substituting (\ref{p_0}) into (\ref{gamma_c}), $\gamma_c(k)$ in (\ref{gamma_c}) can be rewritten as
\begin{equation}
{\gamma _{c}}(k) = \frac{{{\gamma _T}{g_1}}}{{{g_0}}} \frac{{|h_1(k)|^2}}{{|h_0(k)|^2}}.
\label{gamma_c_1}
\end{equation}

\section{Maximum Likelihood (ML) Estimator}
In this section, we develop a ML estimator to obtain the primary channel gain $g_0$ between the PT and the PR. In what follows, we provide the basic principle of the estimator followed by the estimator design and analysis.

\subsection{Basic Principle}
As aforementioned, within a QoS guaranteed primary system, primary signals are carefully designed based on the primary channel gain. In particular, if the primary channel gain is strong, the PT is able to satisfy the target QoS with a small transmit power of primary signals. Otherwise, the PT increases its transmit power to compensate for the target QoS. In other words, primary signals contain some information of the primary channel gain. Thus, the CT can obtain the primary channel gain by sensing primary signals.

By sensing the received primary signals, the CT can obtain the SNR of the received primary signal as shown in (\ref{gamma_c_1}). From (\ref{gamma_c_1}), each SNR of the received primary signal at the CT is highly related to the primary channel gain $g_0$. Then, it is possible for the CT to measure the SNR of the received primary signal and estimate $g_0$. However, it is difficult to obtain $g_0$ directly from (\ref{gamma_c_1}). This is because, each SNR in (\ref{gamma_c_1}) is also affected by random small-scale fadings and varies independently among different blocks. Alternatively, the CT can measure different SNRs of primary signals in multiple blocks and utilize the distribution knowledge of the small-scale fadings to estimate $g_0$. We note that the ML criterion is able to efficiently extract the common information from multiple data and performs well for parameter estimations \cite{ML_estimation}. Thus, we adopt the ML criterion and develop a ML estimator for the CT to obtain the primary channel gain $g_0$ between the PT and PR.

\subsection{Estimator Design}

By removing the block index $k$ in (\ref{gamma_c_1}) and writing (\ref{gamma_c_1}) into the form of dB, we have
\begin{equation}
{\gamma _{c,dB}}= {\gamma _{T,dB}} + g_{1,dB} - g_{0,dB} + 10{\log _{10}}\phi,
\label{gamma_c_dB}
\end{equation}
where the subscript dB of a parameter is the unit of the parameter, and the random variable $\phi$ is defined as $\phi=|h_1|^2/|h_0|^2$.

Since $\phi$ is a random variable, ${\gamma _{c,dB}}$ in (\ref{gamma_c_dB}) is also a random variable. Then, the \emph{cumulative density function} (CDF) of ${\gamma _{c,dB}}$ can be expressed as
\begin{align}\nonumber
F_{\Gamma_{c,dB}}\left(\gamma_{c,dB}\right) =& \Pr\left\{ {\gamma _{T,dB}} + g_{1,dB} - g_{0,dB} \right.\\\nonumber
&\left. \ \ \ + 10{\log _{10}}\phi\leq \gamma_{c,dB}\right\}\\\nonumber
=& \!\Pr\!\left\{\!\phi \!\leq \!10^{\frac{\gamma_{c,dB}- {\gamma _{T,dB}}-g_{1,dB} + g_{0,dB}}{10}}  \right\}\\
=&F_{\Phi}\left(10^{\frac{\gamma_{c,dB}- {\gamma _{T,dB}}-g_{1,dB} + g_{0,dB}}{10}} \right),
\label{gamma_c_dB_cdf}
\end{align}
where $F_{\Phi}(\cdot)$ denotes the CDF of $\phi$.

Since $|h_i|$ ($i=0, \ 1$) follows a Rayleigh distribution with unit mean, the CDF of $\phi=|h_1|^2/|h_0|^2$ is \cite{Joint_PDF}
\begin{equation}
F_{\Phi}(\phi) = \frac{{\phi}}{{{{1 + \phi}}}}.
\label{phi_cdf}
\end{equation}

Substituting (\ref{phi_cdf}) into (\ref{gamma_c_dB_cdf}), we have the CDF of ${\gamma _{c,dB}}$ as
\begin{equation}
F_{\Gamma_{c,dB}}\left(\gamma_{c,dB}\right)=\frac{{10^{\frac{\gamma_{c,dB}- {\gamma _{T,dB}}-g_{1,dB} + g_{0,dB}}{10}}}}{{{{1 + 10^{\frac{\gamma_{c,dB}- {\gamma _{T,dB}}-g_{1,dB} + g_{0,dB}}{10}}}}}}.
\label{gamma_c_dB_cdf_1}
\end{equation}


By taking the derivation of $F_{\Gamma_{c,dB}}\left(\gamma_{c,dB}\right)$ in terms of $\gamma_{c,dB}$, we have the \emph{probability density function} (PDF) of ${\gamma _{c,dB}}$ as
\begin{align}\nonumber
f\left( {{\gamma _{c, dB}}} \right)=& \frac{{\partial F_{\Gamma_{c,dB}}({\gamma _{c,dB}})}}{{\partial {\gamma _{c,dB}}}}\\
=& \frac{{\frac{{\ln 10}}{{10}}{{10}^{\frac{{{\gamma _{T,dB}} + g_{1,dB} - g_{0,dB} - {\gamma _{c,dB}}}}{{10}}}}}}{{{{\left( {1 + {{10}^{\frac{{{\gamma _{T,dB}} + g_{1,dB} - g_{0,dB} - {\gamma _{c,dB}}}}{{10}}}}} \right)}^2}}}.
\label{gamma_c_dB_pdf}
\end{align}

For $K$ independent blocks, the CT is able to measure $K$ independent values of $\gamma_{c,dB}$, i.e., $\gamma _{c, dB}(k)$ ($1\leq k\leq K$). Then, the joint PDF of $\gamma _{c, dB}(k)$ ($1\leq k\leq K$) is
\begin{align}\nonumber
&{f\left( {{\gamma _{c,dB}(1)},{\gamma _{c,dB}(2)},...,{\gamma _{c,dB}(K)}} \right)}\\
=& \prod\limits_{k = 1}^K {\left[ {\frac{{\frac{{\ln 10}}{{10}}{{10}^{\frac{{{\gamma _{T,dB}} + g_{1,dB} - g_{0,dB} - {\gamma _{c,dB}(k)}}}{{10}}}}}}{{{{\left( {1 + {{10}^{\frac{{{\gamma _{T,dB}} + g_{1,dB} - g_{0,dB} - {\gamma _{c,dB}(k)}}}{{10}}}}} \right)}^2}}}} \right]}.
\label{joint_pdf}
\end{align}

Based on the ML criterion, $g_{0,dB}$ can be approximated with the largest probability by the optimal $g^*_{0,dB}$, which maximizes the joint PDF $f\left( {{\gamma _{c,dB}(1)},{\gamma _{c,dB}(2)},...,{\gamma _{c,dB}(K)}} \right)$. Thus, we shall find the optimal $g^*_{0,dB}$ in the following.

Taking the logarithm operation on both sides of (\ref{joint_pdf}), we have
\begin{equation}
\begin{split}
&{\log _{10}}f\left( {{\gamma _{c,dB}}\left( 1 \right),{\gamma _{c,dB}}\left( 2 \right), \ldots ,{\gamma _{c,dB}}\left( K \right)} \right)\\
 =& \sum\limits_{k = 1}^K \left[ {{\log }_{10}}\left( {\ln 10 \times {{10}^{\frac{{{\gamma _{T,dB}} + g_{1,dB} - g_{0,dB} - {\gamma _{c,dB}}\left( k \right)}}{{10}}}}} \right) \right. \\
&\left.- 2{{\log }_{10}}\left( {1 + {{10}^{\frac{{{\gamma _{T,dB}} + g_{1,dB} - g_{0,dB} - {\gamma _{c,dB}}\left( k \right)}}{{10}}}}} \right) - 1 \right].
\end{split}
\label{log_joint_pdf}
\end{equation}

Taking the derivation of (\ref{log_joint_pdf}) in terms of $g_{0,dB}$, we obtain
\begin{align}\nonumber
&\frac{{\partial \left\{ {{{\log }_{10}}f\left( {\gamma _{c,dB} (1),{\gamma _{c,dB}(2)},...,{\gamma _{c,dB}(K)}} \right)} \right\}}}{{\partial {g_{0,dB}}}} \\
= &\sum\limits_{k = 1}^K {\left( {\frac{1}{{10}}\times \frac{{{{10}^{\frac{{{\gamma _{T,dB}} + g_{1,dB} - g_{0,dB}- {\gamma _{c,dB}}}}{{10}}}} - 1}}{{{{10}^{\frac{{{\gamma _{T,dB}} + g_{1,dB} - g_{0,dB} - {\gamma _{c,dB}}}}{{10}}}} + 1}}} \right)}.
\label{a14}
\end{align}

Thus, we can find the optimal $g^*_{0,dB}$ by solving $\frac{{\partial \left\{ {{{\log }_{10}}f\left( {\gamma _{c,dB} (1),{\gamma _{c,dB}(2)},...,{\gamma _{c,dB}(K)}} \right)} \right\}}}{{\partial {g_{0,dB}}}} =0$, i.e.,
\begin{equation}
\sum\limits_{k = 1}^K {\left( {\frac{1}{{10}} \times \frac{{{{10}^{\frac{{{\gamma _{T,dB}} + g_{1,dB} - g_{0,dB} - {\gamma _{c,dB}}}}{{10}}}} - 1}}{{{{10}^{\frac{{{\gamma _{T,dB}} + g_{1,dB} - g_{0,dB} - {\gamma _{c,dB}}}}{{10}}}} + 1}}} \right)}=0.
\label{equation_optimal}
\end{equation}

After obtaining the optimal $g^*_{0,dB}$, we have the ML estimator as
\begin{equation}
\hat{g}_{0,dB}=g^*_{0,dB}.
\label{ML_estimator}
\end{equation}

From (\ref{equation_optimal}), the ML estimator $\hat{g}_{0,dB}$ (or the optimal $g^*_{0,dB}$) is determined by the target SNR $\gamma_{T, dB}$ at the PR, the channel gain $g_{1,dB}$ between the PT and CT, and the SNR $\gamma_{c, dB}$ of the primary signal at the CT. It is notable that, $\gamma_{T, dB}$ can be obtained by the CT through observing the \emph{modulation and coding scheme} (MCS) of the primary signal \cite{D_Tse}. $g_{1,dB}$ is a deterministic function of the distance $d_1$ between the PT and CT, and thus can be calculated at the CT. $\gamma_{c, dB}$ is measured at the CT and also known to the CT. Therefore, the CT is able to solve (\ref{equation_optimal}) and obtain the ML estimator (\ref{ML_estimator}). However, it is difficult to solve (\ref{equation_optimal}) directly, since (\ref{equation_optimal}) is a non-linear equation of $g_{0, dB}$. To deal with this issue, we will develop a bisection method \cite{Binary_search} to solve (\ref{equation_optimal}) in the next part.

\subsection{Bisection Method to Solve (\ref{equation_optimal})}
In this part, we first study the property of (\ref{equation_optimal}) and demonstrate that it is possible to solve (\ref{equation_optimal}) with a bisection method. Then, we develop a bisection method to solve (\ref{equation_optimal}) and obtain the optimal $g^*_{0,dB}$.

To begin with, we denote
\begin{equation}
f_1(g_{0,dB})\!=\!\sum\limits_{k = 1}^K \!{\left( {\frac{1}{{10}} \times \frac{{{{10}^{\frac{{{\gamma _{T,dB}} + g_{1,dB} - g_{0,dB} - {\gamma _{c,dB}}}}{{10}}}} - 1}}{{{{10}^{\frac{{{\gamma _{T,dB}} + g_{1,dB} - g_{0,dB} - {\gamma _{c,dB}}}}{{10}}}} + 1}}} \right)}.
\label{f_1_g_0}
\end{equation}

By taking the derivation of $f_1(g_{0,dB})$ in terms of $g_{0,dB}$, we have
\begin{align} \frac{{\partial {f_1}({g_{0,dB}})}}{{\partial {g_{0,dB}}}} = \sum\limits_{k = 1}^K {\left( {\frac{{ - \ln(10) {{10}^{\frac{{{\gamma _{T,dB}}+g_{1,dB} - g_{0,dB} - {\gamma _{c,dB}}}}{{10}}}}}}{{50{{\left( {1 + {{10}^{\frac{{{\gamma _{T,dB}}+g_{1,dB} - g_{0,dB} - {\gamma _{c, dB}}}}{{10}}}}} \right)}^2}}}} \right)},
\label{derivation_f_1}
\end{align}
which is smaller than or equal to $0$, i.e., $\frac{{\partial {f_1}({g_{0,dB}})}}{{\partial {g_{0,dB}}}} \leq 0$. In other words, ${f_1}({g_{0,dB}})$ monotonically decreases as $g_{0,dB}$ increases. Besides, we observe $f_1(-\infty)=\frac{K}{10}>0$ and $f_1(+\infty)=-\frac{K}{10}<0$. Therefore, (\ref{equation_optimal}) has a unique positive solution and can be efficiently solved by a bisection method.

We provide the detailed bisection method in \textbf{Algorithm \ref{Bisection Method}}. In particular, $g_{0,dB}^{\min}$ and $g_{0,dB}^{\max}$ are the initial values of the bisection method, and denote the minimum value and the maximum value of $g_{0,dB}$, respectively. On one hand, since the PR is in the coverage of the PT, and the maximum coverage radius $d_0=R$ of the PT can be substituted into (\ref{g_i}) to calculate $g_{0,dB}^{\min}$  , i.e., $g_{0,dB}^{\min}=-128-37.6\log_{10}(R)$. For instance, we suppose that the PT is a base station of a cell, the radius of the cell can be known by the CT and be used to calculate $g_{0,dB}^{\min}$. On the other hand, the large-scale channel gain in (\ref{g_i}) requires that the distance between two transceivers is no less than $0.035$ km, $g_{0,dB}^{\max}$ can be calculated by substituting $d_0=0.035$ km into (\ref{g_i}), i.e., $g_{0,dB}^{\max}=-128-37.6\log_{10}(0.035)$. Similarly, if another path-loss model different from (\ref{Path_loss_model}) is adopted, $g_{0,dB}^{\min}$ and $g_{0,dB}^{\max}$ can also be calculated with minor modifications.

\begin{algorithm}[htb]

\algsetup{linenosize=\footnotesize }

\caption{Bisection Method for $g^*_{0,dB}$.}

\label{Bisection Method}

\begin{algorithmic}[1]

\REQUIRE ~~\ \

$g_{0,dB}^{\min}$, $g_{0,dB}^{\max}$, $g_{0,dB}^{\text{mid}}$, and the maximum tolerance error $\nu$;

\ENSURE ~~\ \

\WHILE {$|g_{0,dB}^{\max}-g_{0,dB}^{\min}|>\nu$}

\STATE $g_{0,dB}^{\text{mid} } = \frac{{g_{0,dB}^{\max } + g_{0,dB}^{\min }}}{2}$;

\IF{${f_1}\left( {g_{0,dB}^{{\mathop{\rm mi}\nolimits} d}} \right) {f_1}\left( {g_{0,dB}^{\min }} \right) > 0$}

\STATE $g_{0,dB}^{\min}=g_{0,dB}^{\text{mid}}$;

\ELSE

\STATE $g_{0,dB}^{\max}=g_{0,dB}^{\text{mid}}$;

\ENDIF

\ENDWHILE

\STATE {\bf Return} $g^*_{0,dB}=g_{0,dB}^{\text{mid}}$.

\end{algorithmic}

\end{algorithm}

\subsection{Complexity Analysis}
From the previous parts, the computational complexity of the ML estimator is dominated by solving (\ref{equation_optimal}) with the proposed bisection method in \textbf{Algorithm \ref{Bisection Method}}. Besides,
the computational complexity of the bisection method is $O\left( {{{\log }_2}\frac{{g_{0,dB}^{\max } - g_{0,dB}^{\min }}}{{\nu}}}\right)$ \cite{Binary_search}, \cite{Bisection}, where $\nu$ is the maximum tolerance error of the bisection method in \textbf{Algorithm \ref{Bisection Method}}. Thus, the computational complexity of the ML estimator is $O\left( {{{\log }_2}\frac{{g_{0,dB}^{\max } - g_{0,dB}^{\min }}}{{\nu}}}\right)$.

\section{Median Based (MB) Estimator}
In the previous section, we have developed a ML estimator to obtain an estimation of the primary channel gain. In particular, the ML estimator requires to solve a nonlinear equation and is computationally complicated. In this section, we will present a low complexity estimator. In what follows, we provide the basic principle of the estimator followed by the estimator design and performance analysis.

\subsection{Basic Principle}
To begin with, we provide the definition of the median $x_{\frac{1}{2}}$ of a random variable $X$ as follows,

\begin{definition} For a random variable $X$ with CDF $F_X(x)$, $x\in \mathbb{R}$, if $x_{\frac{1}{2}}$ satisfies both \begin{equation}
F_X(x_{\frac{1}{2}})= \Pr\{X\leq x_{\frac{1}{2}}\}=\frac{1}{2}
\label{median_1}
\end{equation}
and
\begin{equation}
1-F_X(x_{\frac{1}{2}})= \Pr\{X\geq x_{\frac{1}{2}}\}=\frac{1}{2},
\label{median_2}
\end{equation}
$x_{\frac{1}{2}}$ is defined as the median of the random variable $X$.
\end{definition}

Based on Definition 1, we can obtain the median $\gamma_{c, dB, \frac{1}{2}}$ of the random variable $\gamma_{c, dB}$ by letting $F_{\Gamma_{c,dB}}\left(\gamma_{c,dB}\right)$ in (\ref{gamma_c_dB_cdf}) be $\frac{1}{2}$, i.e.,
\begin{equation}
\begin{split}
F_{\Phi}\left(10^{\frac{\gamma_{c,dB}- {\gamma _{T,dB}}-g_{1,dB} + g_{0,dB}}{10}} \right)=\frac{1}{2}.
\label{gamma_c_dB_median_1}
\end{split}
\end{equation}

By substituting (\ref{phi_cdf}) into (\ref{gamma_c_dB_median_1}), we have
\begin{align}\nonumber
&F_{\Phi}\left(10^{\frac{\gamma_{c,dB}- {\gamma _{T,dB}}-g_{1,dB} + g_{0,dB}}{10}} \right)\\ \nonumber
=&\frac{10^{\frac{\gamma_{c,dB}- {\gamma _{T,dB}}-g_{1,dB} + g_{0,dB}}{10}}}{1+10^{\frac{\gamma_{c,dB}- {\gamma _{T,dB}}-g_{1,dB} + g_{0,dB}}{10}}}\\
=&\frac{1}{2}.
\label{gamma_c_dB_median_2}
\end{align}

After solving (\ref{gamma_c_dB_median_2}), the median $\gamma_{c, dB, \frac{1}{2}}$ of the random variable $\gamma_{c, dB}$ can be derived as
\begin{equation}
\begin{split}
\gamma_{c, dB, \frac{1}{2}}= {\gamma _{T,dB}} + g_{1,dB} - g_{0,dB}.
\label{gamma_c_dB_median}
\end{split}
\end{equation}

From (\ref{gamma_c_dB_median}), the median $\gamma_{c, dB, \frac{1}{2}}$ is a function of the primary channel gain $g_{0,dB}$. Thus, if $\gamma_{c, dB, \frac{1}{2}}$ is available to the CT, $g_{0,dB}$ can be directly calculated with (\ref{gamma_c_dB_median}). However, $\gamma_{c, dB, \frac{1}{2}}$ is unknown to the CT. Instead, we will first estimate $\gamma_{c, dB, \frac{1}{2}}$ and then obtain the estimation of $g_{0,dB}$ with (\ref{gamma_c_dB_median}).

\subsection{Estimator Design}
We first give the definition of the sample median $x^s_{\frac{1}{2}}$ of a random variable $X$ as follows,

\begin{definition} For a random variable $X$ with samples $x_m$ ($1\leq m \leq M$), if $x^{s}_{\frac{1}{2}}$ satisfies both \begin{equation} \nonumber
\Pr\{x_m\leq x^{s}_{\frac{1}{2}}\}=\frac{1}{2}
\label{sample_median_1}
\end{equation}
and
\begin{equation} \nonumber
\Pr\{x_m \geq x^{s}_{\frac{1}{2}}\}=\frac{1}{2},
\label{sample_median_2}
\end{equation}
$x^{s}_{\frac{1}{2}}$ is defined as the sample median of the random variable $X$.
\end{definition}

As mentioned in the previous section, for $K$ independent blocks, the CT is able to measure $K$ independent samples of $\gamma_{c, dB}$, i.e., $\gamma _{c, dB}(k)$ ($1\leq k\leq K$). In what follows, we approximate the median $\gamma_{c, dB, \frac{1}{2}}$ with the sample median $\gamma^s_{c, dB, \frac{1}{2}}$ of these $K$ samples. With the approximated $\gamma_{c, dB, \frac{1}{2}}$, $g_{0,dB}$ can be estimated by calculating (\ref{gamma_c_dB_median}).

To begin with, by sorting the $K$ samples in ascending order, the $K$ samples can be relabelled as $\bar{\gamma}_{c, dB}(k)$ ($1\leq k\leq K$), i.e., $\bar{\gamma} _{c, dB}(i) \leq \bar{\gamma} _{c, dB}(j)$ for $1\leq i \leq j \leq K$. Since the sample medians $\bar \gamma^s_{c, dB, \frac{1}{2}}$ of these $K$ samples for odd and even $K$ can be different, we will discuss sample medians for odd and even $K$ separately.

 \subsubsection{For the case that $K$ is odd} When $K$ is odd, the sample median is $\gamma^s_{c, dB, \frac{1}{2}}=\bar \gamma _{c, dB}\left(\frac{K+1}{2}\right)$. Then, the median of $\gamma_{c, dB}$ can be approximated as
\begin{equation}
\gamma_{c, dB, \frac{1}{2}}\approx \bar \gamma _{c, dB}\left(\frac{K+1}{2}\right).
\label{approximate_odd}
\end{equation}

By substituting (\ref{approximate_odd}) into (\ref{gamma_c_dB_median}), we have the MB estimator as
\begin{equation}
\begin{split}
\hat{g}_{0,dB}= {\gamma _{T,dB}} + g_{1,dB}-\bar \gamma _{c, dB}\left(\frac{K+1}{2}\right).
\label{MB_estimator_odd}
\end{split}
\end{equation}

 \subsubsection{For the case that $K$ is even} When $K$ is even, the sample median is between $\bar \gamma_{c, dB}\left(\frac{K}{2}\right)$ and $\bar \gamma _{c, dB}\left(\frac{K}{2}+1\right)$. Then, the median of $\gamma_{c, dB}$ can be approximated as
\begin{equation}
\gamma_{c, dB, \frac{1}{2}}\approx\frac{\bar\gamma _{c, dB}\left(\frac{K}{2}\right)+\bar\gamma _{c, dB}\left(\frac{K}{2}+1\right)}{2}.
\label{approximate_even}
\end{equation}

By substituting (\ref{approximate_even}) into (\ref{gamma_c_dB_median}), we have the MB estimator as
\begin{equation}
\begin{split}
\hat{g}_{0,dB}={\gamma _{T,dB}} +g_{1,dB}-\frac{\bar \gamma _{c, dB}\left(\frac{K}{2}\right)+\bar \gamma _{c, dB}\left(\frac{K}{2}+1\right)}{2}.
\label{MB_estimator_even}
\end{split}
\end{equation}

Consequently, the MB estimator can be summarized as (\ref{MB_estimator}) on the top of the next page.
\begin{figure*}[t!]
\begin{eqnarray}
\hat{g}_{0,dB}=\left\{
\begin{split}
&{\gamma _{T,dB}} + g_{1,dB}-\bar \gamma _{c, dB}\left(\frac{K+1}{2}\right), \quad \quad \quad \quad \ \ \text{for} \ K \ \text{is odd}, \\
&{\gamma _{T,dB}} + g_{1,dB}-\frac{\bar \gamma _{c, dB}\left(\frac{K}{2}\right)+\bar \gamma _{c, dB}\left(\frac{K}{2}+1\right)}{2}, \ \text{for} \ K \ \text{is even}.
\end{split}
\right.
\label{MB_estimator}
\end{eqnarray}
\hrule
\end{figure*}


From (\ref{MB_estimator}), the MB estimator is determined by the target SNR $\gamma _{T,dB}$, the channel gain $g_{1,dB}$ from the PT to the CT, and the measured SNRs at the CT, all of which are available to CT. Thus, the estimation of $g_{0,dB}$ can be directly calculated with (\ref{MB_estimator}). In other words, the computational complexity of the MB estimator in (\ref{MB_estimator}) is $O(1)$.

\begin{theorem} The estimation in (\ref{MB_estimator}) approaches $g_{0,dB}$ as $K$ grows to the infinity, i.e., for any $\mu>0$,
\begin{equation}
\underset{K \to \infty}{\lim} \Pr\left\{ \left| \hat{g}_{0,dB}-g_{0,dB} \right| < \mu \right\} = 1.
\label{Theorem_2}
\end{equation}
\end{theorem}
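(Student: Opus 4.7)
The plan is to reduce the statement to the classical consistency of the sample median. From the MB estimator formula together with the identity $\gamma_{c,dB,1/2} = \gamma_{T,dB} + g_{1,dB} - g_{0,dB}$ derived in (\ref{gamma_c_dB_median}), the error of the estimator equals the error of the sample median:
\begin{equation}
\hat{g}_{0,dB} - g_{0,dB} = \gamma_{c,dB,\frac{1}{2}} - \gamma^{s}_{c,dB,\frac{1}{2}}.
\end{equation}
So it suffices to show that $\gamma^{s}_{c,dB,\frac{1}{2}} \to \gamma_{c,dB,\frac{1}{2}}$ in probability as $K \to \infty$. I will treat the odd case $K=2n+1$ in detail; the even case is handled by noting that $\bar\gamma_{c,dB}(K/2)$ and $\bar\gamma_{c,dB}(K/2+1)$ are both sandwiched by the same argument, so their average converges too.

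Write $m = \gamma_{c,dB,\frac{1}{2}}$ and, for $\mu>0$, set $p_{+} = F_{\Gamma_{c,dB}}(m+\mu)$ and $p_{-} = F_{\Gamma_{c,dB}}(m-\mu)$. Since $F_{\Gamma_{c,dB}}$ is strictly increasing (its density (\ref{gamma_c_dB_pdf}) is everywhere positive) and $F_{\Gamma_{c,dB}}(m) = 1/2$, we have $p_{-} < 1/2 < p_{+}$. The key observation is a monotonicity between order statistics and empirical counts: for $K=2n+1$,
\begin{equation}
\bar\gamma_{c,dB}(n+1) > m+\mu \;\Longleftrightarrow\; S_{+} \le n, \quad \bar\gamma_{c,dB}(n+1) < m-\mu \;\Longleftrightarrow\; S_{-} \ge n+1,
\end{equation}
where $S_{\pm}=\sum_{k=1}^{K} \mathbf{1}\{\gamma_{c,dB}(k) \le m\pm\mu\}$ are binomial counts with parameters $(K,p_{\pm})$.

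Next, I will apply the weak law of large numbers (or, more quantitatively, Hoeffding's inequality) to the normalized counts $S_{\pm}/K$. Because $p_{+} > 1/2$, the event $\{S_{+}/K \le 1/2 - 1/(2K)\}$ lies in $\{|S_{+}/K - p_{+}| \ge p_{+} - 1/2 - 1/(2K)\}$, whose probability tends to $0$ as $K\to\infty$; similarly for $S_{-}/K$ versus $p_{-}$. Combining both tail bounds via a union bound yields $\Pr\{|\gamma^{s}_{c,dB,\frac{1}{2}} - m| \ge \mu\} \to 0$, proving (\ref{Theorem_2}).

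The main obstacle is the bookkeeping around the sample median definition, and in particular giving the even case the same treatment: one must observe that both central order statistics are squeezed between $m-\mu$ and $m+\mu$ on the same high-probability event $\{S_{-}/K < 1/2 \le S_{+}/K\}$, so their average inherits the convergence. Beyond this, all that is needed is positivity of the density of $\gamma_{c,dB}$ at $m$, which is guaranteed by (\ref{gamma_c_dB_pdf}); no heavier probabilistic machinery is required.
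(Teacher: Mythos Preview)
Your proposal is correct and in fact cleaner than the paper's own argument, but the route is different enough to be worth a remark. The paper does not fix $\mu$ and bound the tails of a binomial count as you do. Instead, it sandwiches the sample median between the two neighbouring order statistics $\grave{\gamma}_{c,dB}$ and $\acute{\gamma}_{c,dB}$, computes the \emph{empirical} fraction of samples below each of these (e.g.\ $\tfrac{(K-1)/2}{K}$ in the odd case), observes that these fractions tend to $\tfrac12$ as $K\to\infty$, and then---implicitly invoking convergence of the empirical CDF to $F_{\Gamma_{c,dB}}$---concludes that $F_{\Gamma_{c,dB}}(\grave{\gamma}_{c,dB})$ and $F_{\Gamma_{c,dB}}(\acute{\gamma}_{c,dB})$ are both close to $\tfrac12$, hence $\grave{\gamma}_{c,dB}$ and $\acute{\gamma}_{c,dB}$ (and the sample median between them) are close to the true median by continuity of $F_{\Gamma_{c,dB}}^{-1}$. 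Your approach is the textbook one: fix the deviation $\mu$, convert the order-statistic event into a binomial event via the indicator sums $S_{\pm}$, and apply the weak law of large numbers directly. This buys you a rigorous quantitative bound (Hoeffding even gives an exponential rate), and avoids the somewhat informal passage from empirical to true CDF that the paper makes. The paper's approach, on the other hand, makes the ``squeezing'' by adjacent order statistics explicit, which is perhaps more intuitive for readers unfamiliar with the order-statistic/binomial correspondence. Your treatment of the even case via the same high-probability event is fine and matches the spirit of the paper's separate odd/even bookkeeping.
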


\begin{proof}
The detailed proof of this Theorem is provided in Appendix C.
\end{proof}

Theorem $1$ indicates that the estimator in (\ref{MB_estimator}) is consistent. In other words, if the number $K$ of the measured SNRs at the CT is large enough, the estimation error is negligible. However, a large $K$ means that the CT needs to measure more primary signals. This increases the computational complexity of the MB estimator. Thus, a proper $K$ is required to balance the estimation accuracy and the computational complexity in practical situations.

\subsection{Comparison between of the ML Estimator and the MB Estimator}
In this part, we compare the ML estimator and the MB estimator from two aspects, i.e., computational complexity and estimation accuracy.

\subsubsection{Computational complexity comparison} As mentioned above, the computational complexity of the ML estimator and the MB estimator are $O\left( {{{\log }_2}\frac{{g_{0,dB}^{\max } - g_{0,dB}^{\min }}}{{\nu}}}\right)$ and $O(1)$, respectively. Thus, the MB estimator is much simpler than the ML estimator.

\subsubsection{Estimation error comparison} In principle, the ML estimator utilizes all the available samples of $\gamma_{c, dB}$, i.e., $\gamma_{c, dB}(k)$ ($1\leq k \leq K$), and outputs an estimation of $g_{0,dB}$. This is different from the MB estimator, which only utilizes the sample median to estimate $g_{0,dB}$. Ideally, the more samples one estimator utilizes, the more accurate the estimation is. In fact, each sample of $\gamma_{c, dB}$ is physically measured at the CT and disturbed by the noise. Thus, each sample contains both the information of $g_{0,dB}$ and noise. In particular, if each measured SNR sample of $\gamma_{c, dB}$ is large, i.e., the conveyed information of each sample is much more than the contained noise, estimators are able to extract more knowledge of $g_{0,dB}$ from more samples, and obtain more accurate estimations. Otherwise, estimators will be more confused by more samples, and thus output less accurate estimations. Therefore, the ML estimator is expected to outperform the MB estimation in terms of the estimation accuracy when the measured SNRs at the CT are large. Otherwise, the MB estimator is superior to the ML estimator. This is verified through numerical results.

\section{Numerical results}
In this section, we provide the numerical results to demonstrate the performance of the proposed ML estimator and MB estimator. Here, we adopt the system model in Section II, where the radius of the PT's coverage is $R=0.5$ km. Besides, we assume that the power of the AWGN is $\sigma^2=-114$ dBm, and the target SNR of the PR is $\gamma_{T,dB}=10$ dB, and the number of samples to measure a SNR at the CT within each block is $J=100$, and the tolerance error in \textbf{Algorithm 1} is $\nu=0.1$. Furthermore, $10^4$ Monte Carlo trails are conducted for each curve.


To begin with, we define the estimation error as
\begin{equation}\label{a31}
\varepsilon = \left| \hat g_{0,dB} - g_{0,dB} \right|.
\end{equation}

\begin{figure}[t!]
\centering
\includegraphics[scale=0.5]{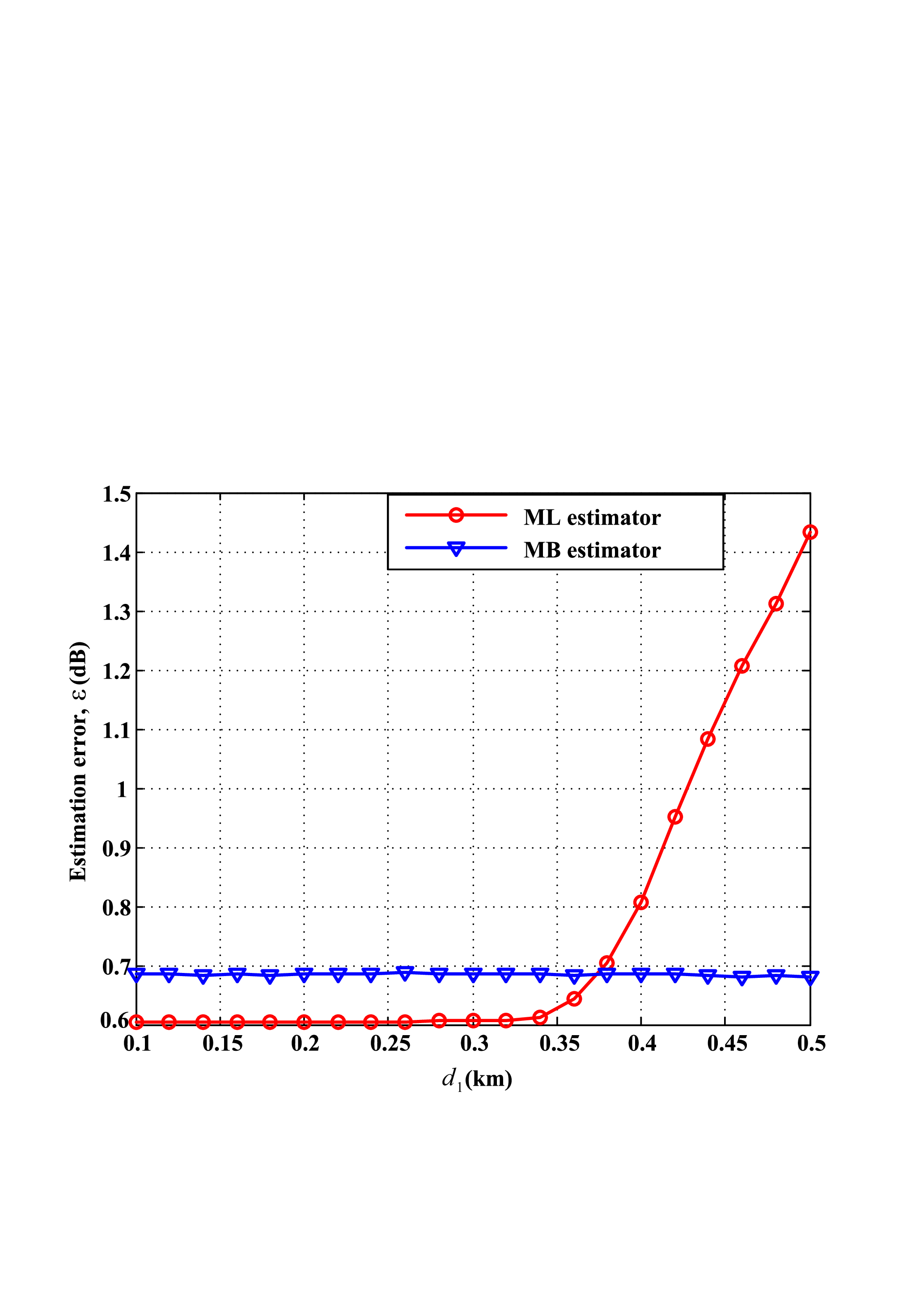}
\caption{The estimation errors with different distances $d_1$ between the PT and CT. In particular, the distance $d_0$ between the PT and PR is $0.25$ km.}
\label{error_d1}
\end{figure}

Fig. \ref{error_d1} illustrates the estimation error $\varepsilon$ when the distance $d_1$ between the PT and CT grows from $0.1$ km to $0.5$ km. In particular, the distance $d_0$ between the PT and PR is $0.25$ km. From this figure, $\varepsilon$ of the ML estimator remains at around $0.6$ dB as $d_1$ grows from $0.1$ km to $0.35$ km and increases from around $0.6$ dB to $1.45$ dB as $d_1$ grows from $0.35$ km to $0.5$ km. Meanwhile, we observe that $\varepsilon$ of the MB estimator remains at around $0.68$ dB for $d_1\leq 0.5$ km. By comparing the estimation errors of the ML estimator and the MB estimator, the ML estimator outperforms the MB estimator for $d_1\leq 0.37$. And the MB estimator is superior to the ML estimator for $d_1 \geq 0.37$.

\begin{figure}[t!]
\centering
\includegraphics[scale=0.5]{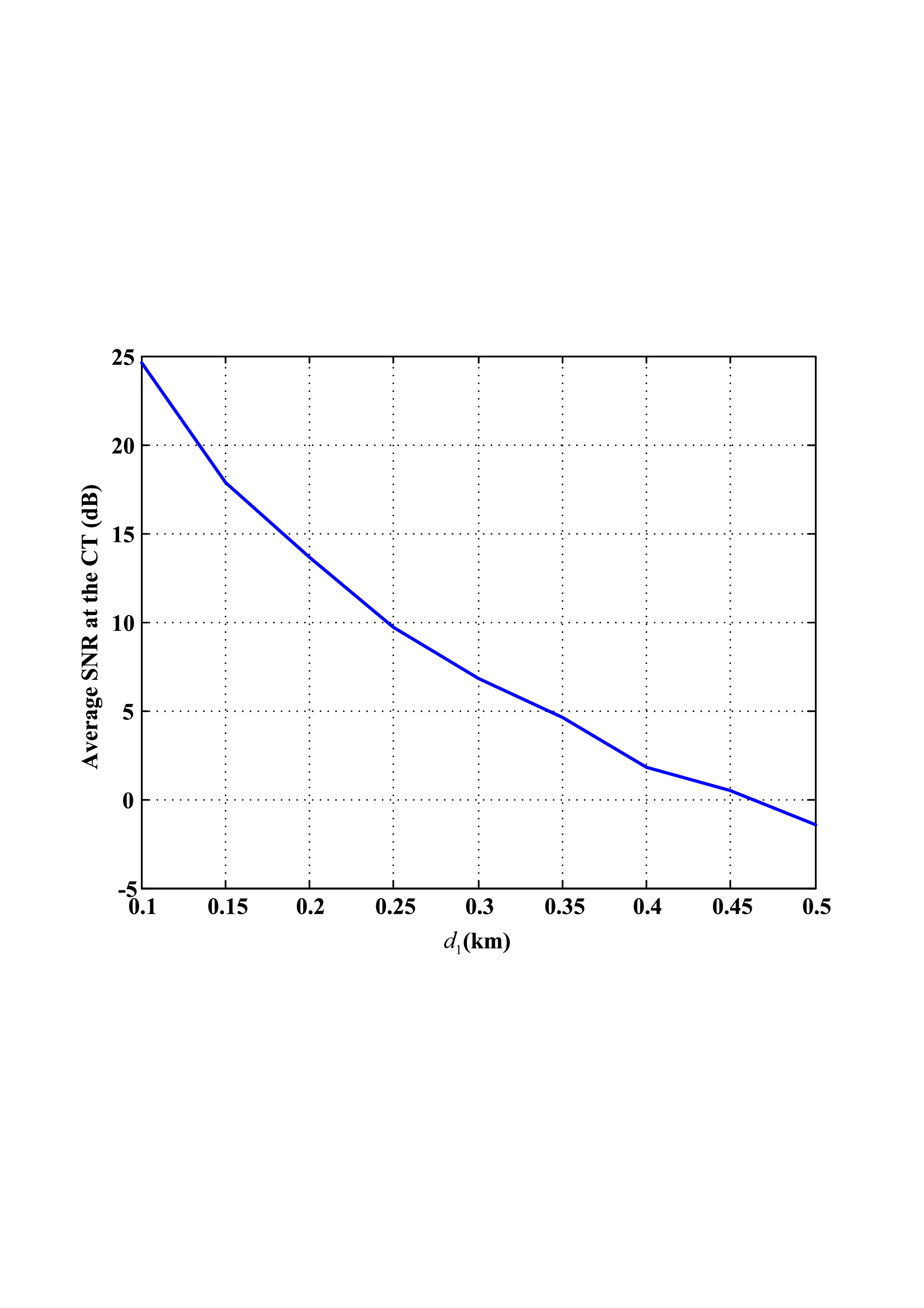}
\caption{Average measured SNR at the CT with different distances $d_1$ between the PT and CT. In particular, the distance $d_0$ between the PT and PR is $0.25$ km.}
\label{SNR_d1}
\end{figure}

Note that, the estimation error of an estimator is affected by two factors, i.e., the number $K$ of the measured SNRs at the CT and the measure error of each SNR. Since $K$ in Fig. \ref{error_d1} is fixed by $K=100$ as $d_1$ grows, the incurred estimation error by $K$ is also fixed. Then, the variation of the estimation error in this figure is caused by the measure error of each SNR. We analyze the variation of the estimation error as follows: Since the distance $d_0$ between the PT and PR is fixed at $d_1=0.25$ km, the average transmit power of the PT remains constant to guarantee the target SNR at the PR. As $d_1$ grows from $0.1$ km to $0.5$ km, the channel gain $g_{1,dB}$ is degraded. Then, the average SNR of the measured primary signals at the CT decreases from around $25$ dB to around $-1$ dB as shown in Fig. \ref{SNR_d1}. This increases the measure error of each SNR at the CT. By adopting these measured SNRs to estimate the primary channel gain $g_{0,dB}$, the estimation error is increased.

Since the estimation error of the ML estimator for $d_1\leq 0.35$ km in Fig. \ref{error_d1} almost remains constant, the estimation error of the ML estimator caused by the measure error of each SNR is negligible for $d_1\leq 0.35$ km, i.e., the average SNR at the CT is accordingly no less than $5$ dB from Fig. \ref{SNR_d1}. Thus, the estimation error of the ML estimator is dominated by the the number $K$ of the measured SNRs at the CT when the average SNR at the CT is no less than $5$ dB. For $d_1\geq 0.35$, i.e., the average SNR at the CT is accordingly smaller than $5$ dB, the estimation error of the ML estimator increases as $d_1$ grows. Thus, the estimation error of the ML estimator is dominated by the the number $K$ of the measured SNRs at the CT as well as the measure errors of each SNR when the average SNR at the CT is smaller than $5$ dB. Furthermore, since the estimation error of the MB estimator in Fig. \ref{error_d1} remains constant for $d_1\leq 0.5$ km, i.e., the average SNR at the CT is accordingly no less than $-1$ dB from Fig. \ref{SNR_d1}, the estimation error caused by the measure error of each SNR is negligible. Thus, the estimation error of the MB estimator is dominated by the number $K$ of the measured SNRs at the CT when the average SNR at the CT is no less than $-1$ dB. Meanwhile, the MB estimator is more robust than the ML estimator respect to the measure error of each SNR. By comparing the estimation errors of the ML estimator and the MB estimator, the ML estimator outperforms the MB estimator when the average SNR at the CT is no less than $4$ dB. And the MB estimator is superior to the ML estimator when the average SNR at the CT is smaller than $4$ dB. This also verified our analysis in Section IV.

\begin{figure}[t!]
\centering
            \includegraphics[scale=0.5]{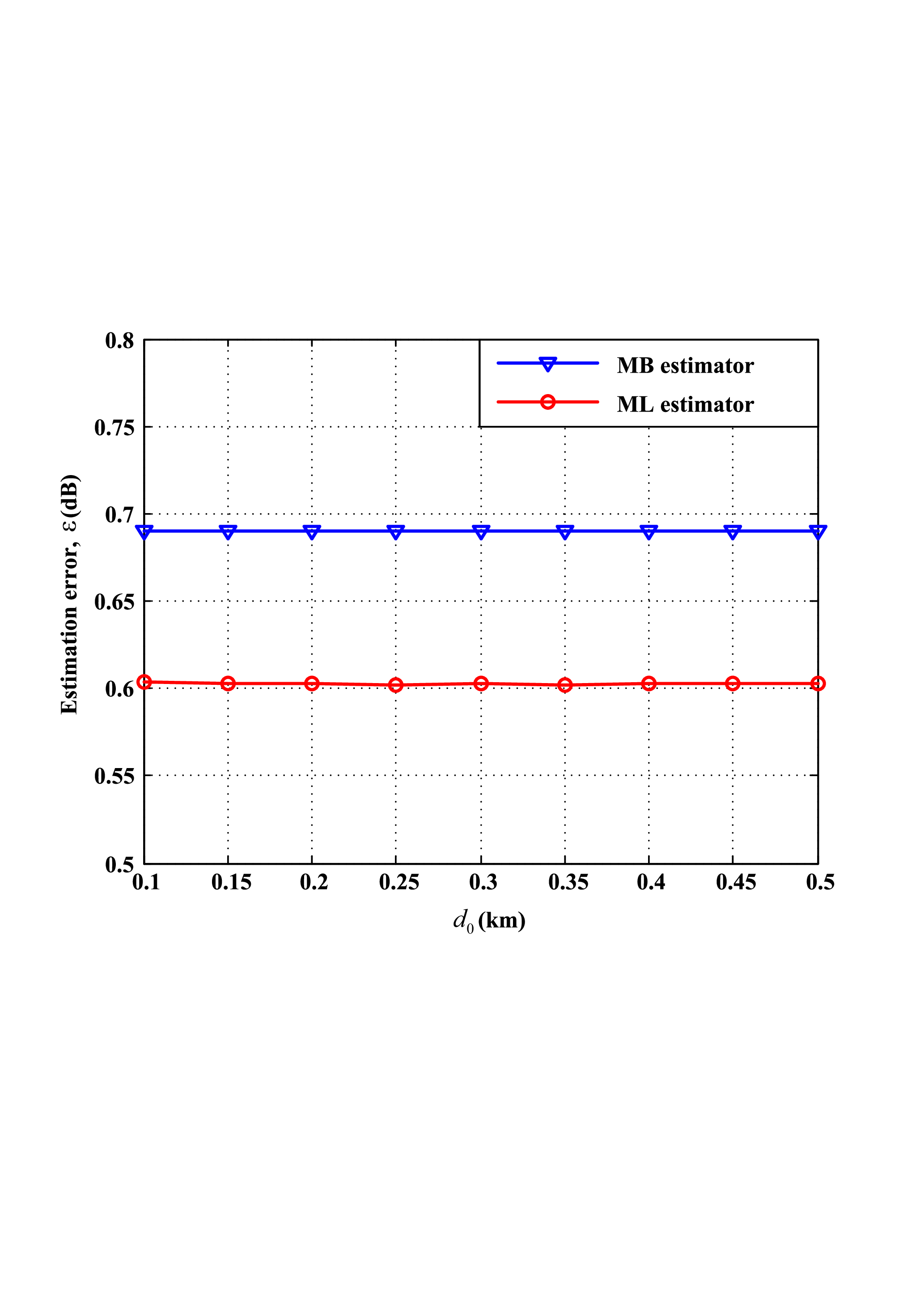}
            \caption{The estimation errors $\varepsilon$ with different distances $d_0$ between the PT and PR. In particular, the distance $d_1$ between the PT and CT is $0.1$ km.}
            \label{error_d0}
        \end{figure}

 \begin{figure}[t!]
\centering
\includegraphics[scale=0.5]{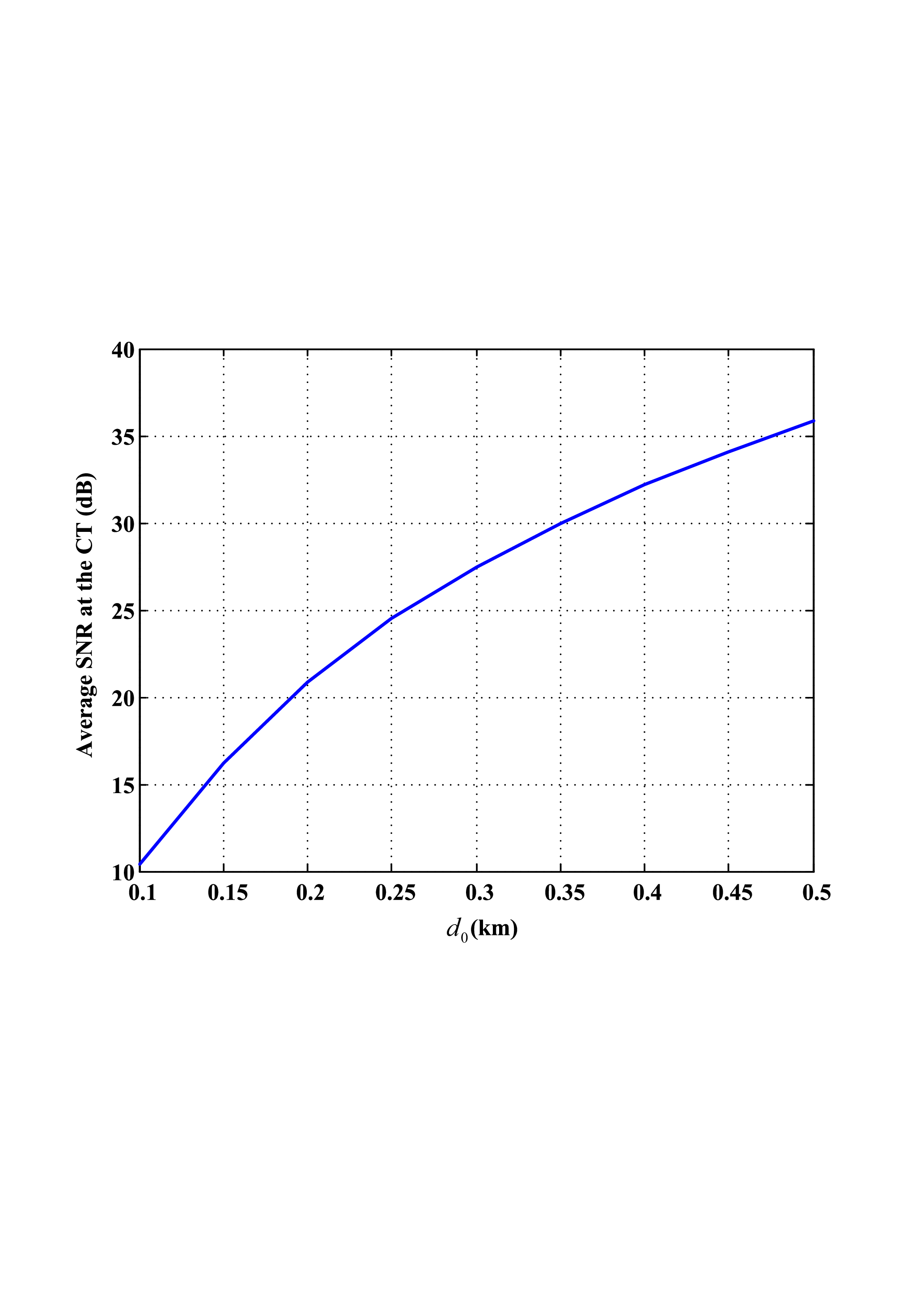}
\caption{Average measured SNR at the CT with different distances $d_0$ between the PT and PR. In particular, the distance $d_0$ between the PT and PR is $0.1$ km.}
\label{SNR_d0}
\end{figure}

Fig. \ref{error_d0} shows the estimation error $\varepsilon$ when the distance $d_0$ between the PT and PR grows from $0.1$ km to $0.5$ km. In particular, the distance $d_1$ between the PT and CT is $0.1$ km. From this figure, $\varepsilon$ of the ML estimator and the MB estimator remain at around $0.6$ dB and $0.7$ dB respectively as $d_0$ grows from $0.1$ km to $0.5$ km. By comparing $\varepsilon$ of the ML estimator and the MB estimator, the ML estimator outperforms the MB estimator. The reasons are follows: As $d_0$ grows from $0.1$ km to $0.5$ km, the primary channel gain $g_{0,dB}$ between the PT and PR decreases. To guarantee the target SNR $\gamma_{T,dB}$ at the PR, the PT increases its transmit power. This enhances the average SNR of the primary signals at the CT as shown in Fig. \ref{SNR_d0}. Based on Fig. \ref{error_d0} and Fig. \ref{SNR_d0}, the average SNR of the primary signals at the CT increases from around $10$ dB to $35$ dB. From the analysis in previous figures, the estimation error caused by the measure error of each SNR with the ML estimator is negligible when the average SNR of the primary signals at the CT is no less than $5$ dB. Thus, $\varepsilon$ of the ML estimator almost remains constant in this figure. The trend of $\varepsilon$ with the MB estimator can be similarly explained. Again from the analysis in previous figures, the estimation error with the ML estimator is smaller than that with the MB estimator when the average SNR of the primary signal at the CT is larger than or equal to $4$ dB. Thus, the curve of $\varepsilon$ with the ML estimator is below that with the MB estimator in this figure.

\begin{figure}[t!]
            \centering
            \includegraphics[scale=0.5]{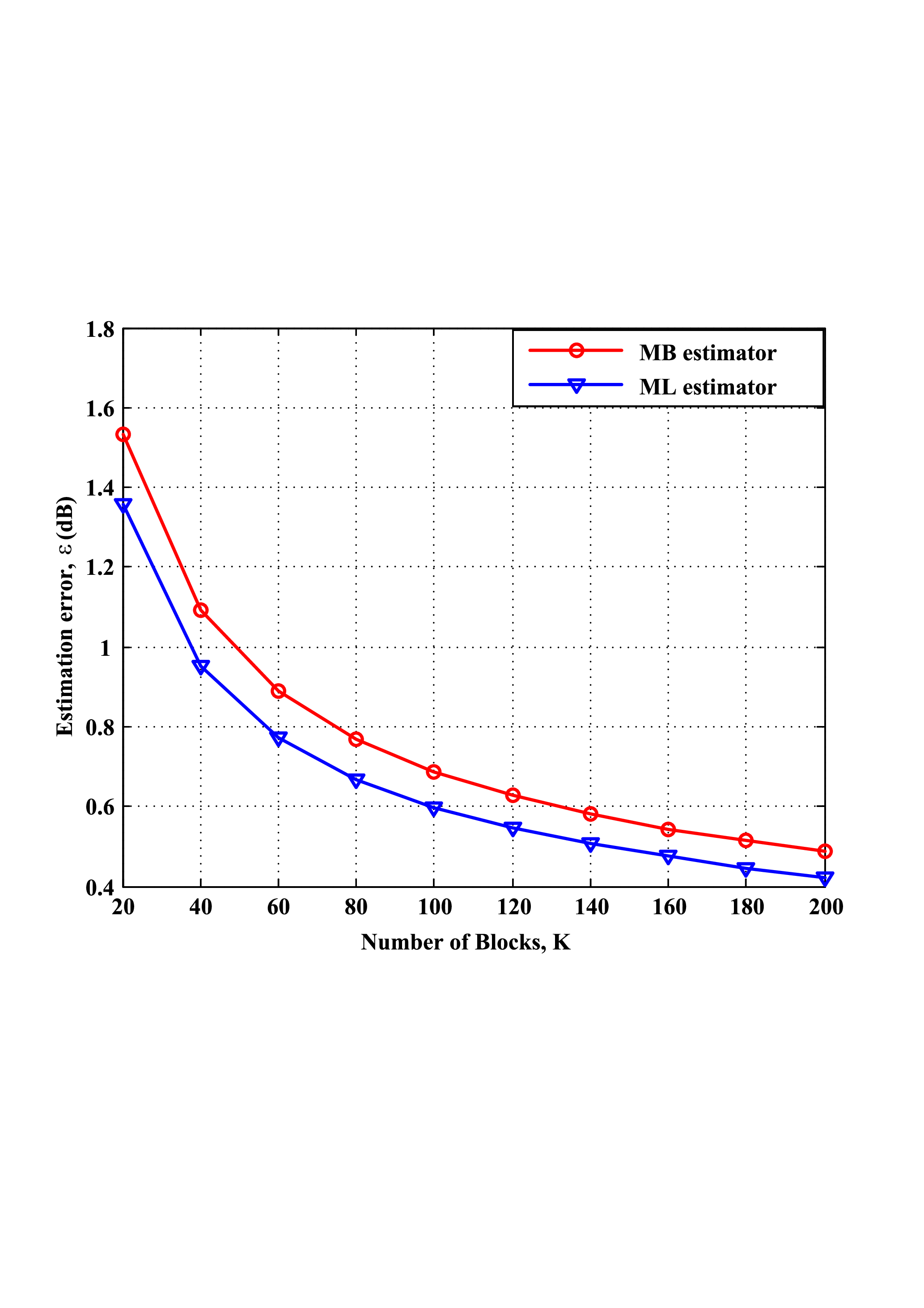}
            \caption{The estimation errors $\varepsilon$ of the ML estimator and the MB estimator, where ML estimator denotes the estimator of (\ref{ML_estimator}) and the MB estimator denotes the estimator in (\ref{MB_estimator}).}
            \label{error_K}
        \end{figure}

Fig. \ref{error_K} provides the estimation error $\varepsilon$ versus the number of the measured SNRs at the CT, i.e., $K$. In particular, the distance $d_0$ between the PT and PR is $0.25$ km and the distance $d_1$ between the PT and CT is $0.1$ km. From this figure, $\varepsilon$ of the ML estimator and the MB estimator monotonically decrease with the growth of $K$. This is because, a larger $K$ means that the ML estimator can utilize more measured SNRs at the CT, which provide more information of the primary channel gain $g_{0,dB}$. By adopting the ML criterion, the ML estimator is able to extract more information of $g_{0,dB}$ and outputs a more accurate estimation. This leads to a smaller estimation error with the ML estimator, and also verifies the results in Theorem 2. Besides, we observe that $\varepsilon$ of the ML estimator is smaller than that of the MB estimator. This is reasonable, since the average SNR of the measured primary signals is around $24$ dB from Fig. \ref{SNR_d1} when $d_0=0.25$ km and $d_1=0.1$ km. From the analysis of both Fig. \ref{error_d1} and Fig. \ref{SNR_d1}, $\varepsilon$ of the ML estimator is smaller than that of the MB estimator when the average SNR at the CT is no less than $4$ dB. Furthermore, we observe that, the gap of $\varepsilon$ with two different estimators is smaller than $0.1$ dB, which is negligible respect to the primary channel gain $g_{0,dB}$.

          \begin{figure}[t!]
            \centering
            \includegraphics[scale=0.5]{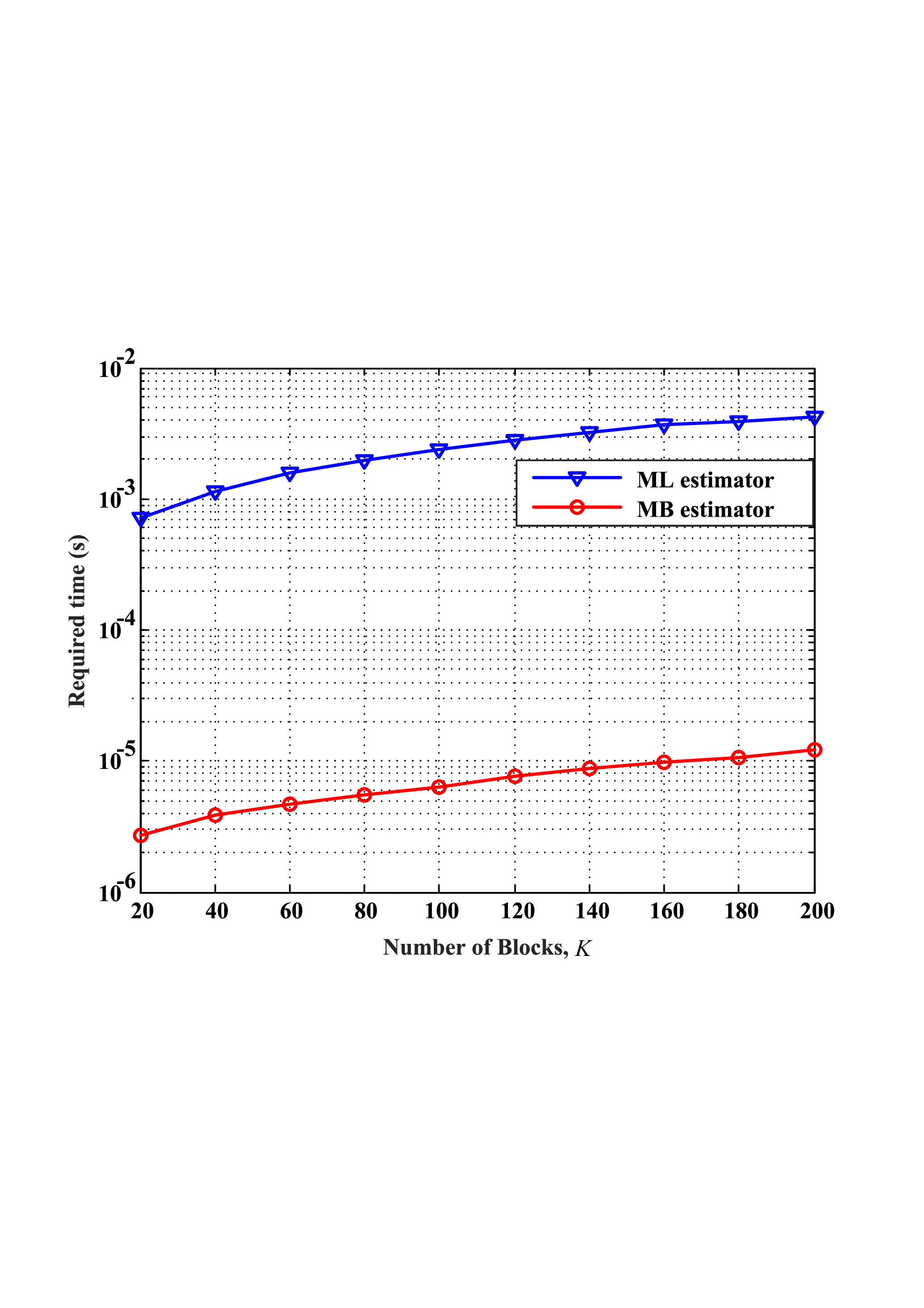}
            \caption{Comparison of the computation complexity with the ML estimator and the MB estimator. In particular, a smaller required times means a lower computation complexity.}
            \label{Time_K}
        \end{figure}

Accordingly, Fig. \ref{Time_K} compares the computation complexity with the ML estimator and the MB estimator. In particular, a smaller required time means a lower computation complexity. From this figure, the required time to obtain an estimation with the ML estimator is almost $100$ times of the required time with the MB estimator. This shows the advantages of the MB estimator over the ML estimator from the aspect of computational complexity and also verifies our analysis in the previous sections.

          \begin{figure}[t!]
            \centering
            \includegraphics[scale=0.5]{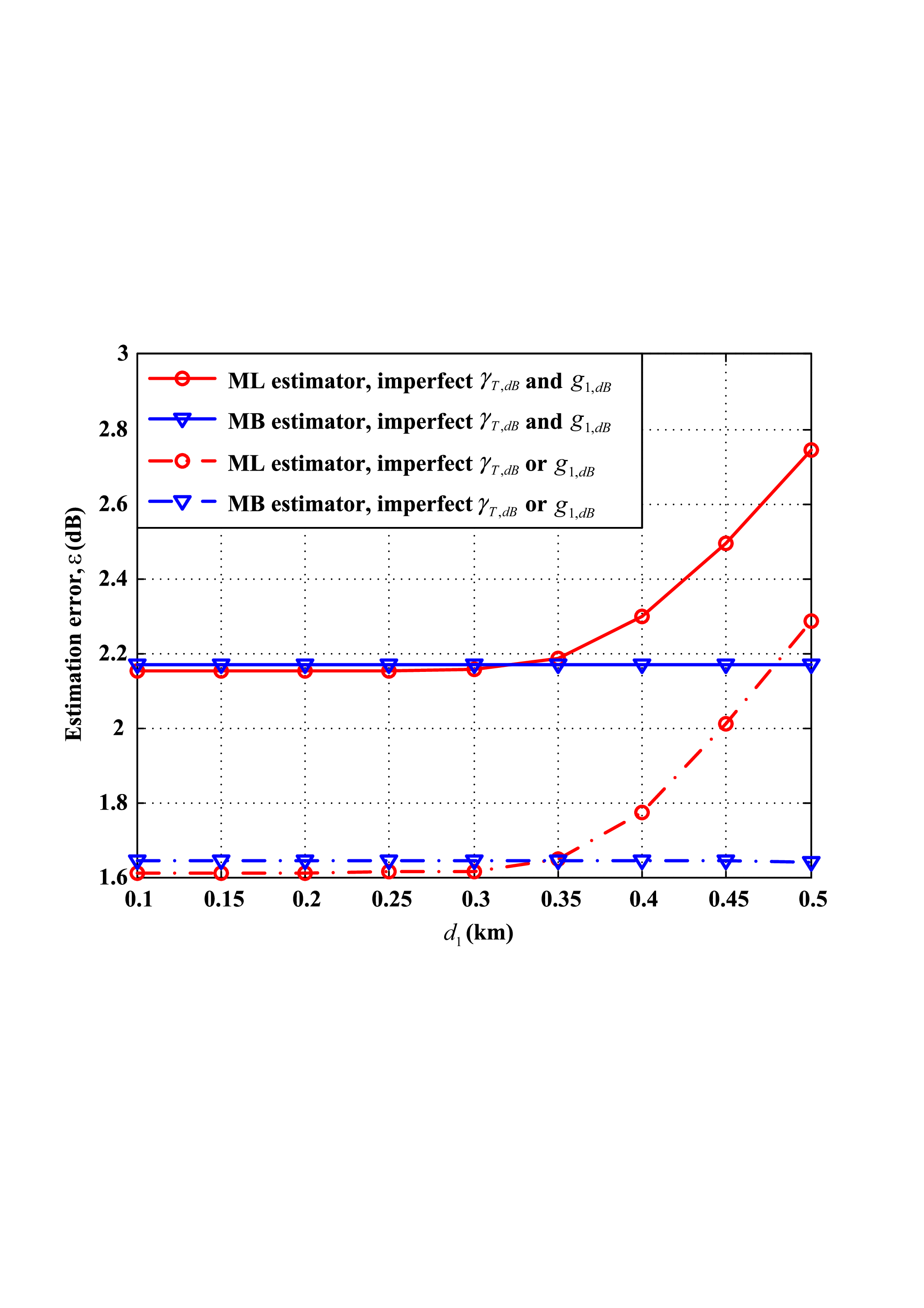}
            \caption{The estimation errors $\varepsilon$ of the ML estimator and the MB estimator with imperfect information of $\gamma_{T,dB}$ or/and $g_{1,dB}$, versus the distance $d_1$ between the PT and the CT. In particular, Case I denotes that the error of $\gamma_{T,dB}$ or $g_{1,dB}$ is uniformly distributed within $[-3, 3]$ dB. Case II denotes that the errors of both $\gamma_{T,dB}$ and $g_{1,dB}$ are uniformly distributed within $[-3, 3]$ dB.}
            \label{error_d1_gap}
        \end{figure}

Fig. 9 provides the estimation error $\varepsilon$ with imperfect information of $\gamma_{T,dB}$ or/and $g_{1,dB}$, versus the distance $d_1$ between the PT and the CT. In particular, Case I means that the error of $\gamma_{T,dB}$ or $g_{1,dB}$ is uniformly distributed within $[-3, 3]$ dB. Since the impacts of imperfect $\gamma_{T,dB}$ and imperfect $g_{1,dB}$ on the estimation of $g_{0,dB}$ are symmetrical from the relation among $g_{0,dB}$, $\gamma_{T,dB}$, and $g_{1,dB}$ in (\ref{gamma_c_dB}), $\varepsilon$ of both estimators are the same when the error of $\gamma_{T,dB}$ or $g_{1,dB}$ is uniformly distributed within $[-3, 3]$ dB. Case II means that the errors of both $\gamma_{T,dB}$ and $g_{1,dB}$ are uniformly distributed within $[-3, 3]$ dB. From this figure, the estimation errors of both estimators in Case I and Case II are increased by around $1$ dB and $1.5$ dB respectively, compared with the case where both $\gamma_{T,dB}$ and $g_{1,dB}$ are perfect. This shows that the estimation errors of both estimators can be no larger than around $2.2$ dB, even when both $\gamma_{T,dB}$ and $g_{1,dB}$ are imperfect, demonstrating the robustness of the proposed two estimators. Furthermore, we observe that the gap of the estimation errors between both estimators is dramatically reduced when considering imperfect $\gamma_{T,dB}$ or/and $g_{1,dB}$. This indicates that the MB estimator is more robust than the ML estimator in terms of imperfect $\gamma_{T,dB}$ or/and $g_{1,dB}$.

          \begin{figure}[t!]
            \centering
            \includegraphics[scale=0.5]{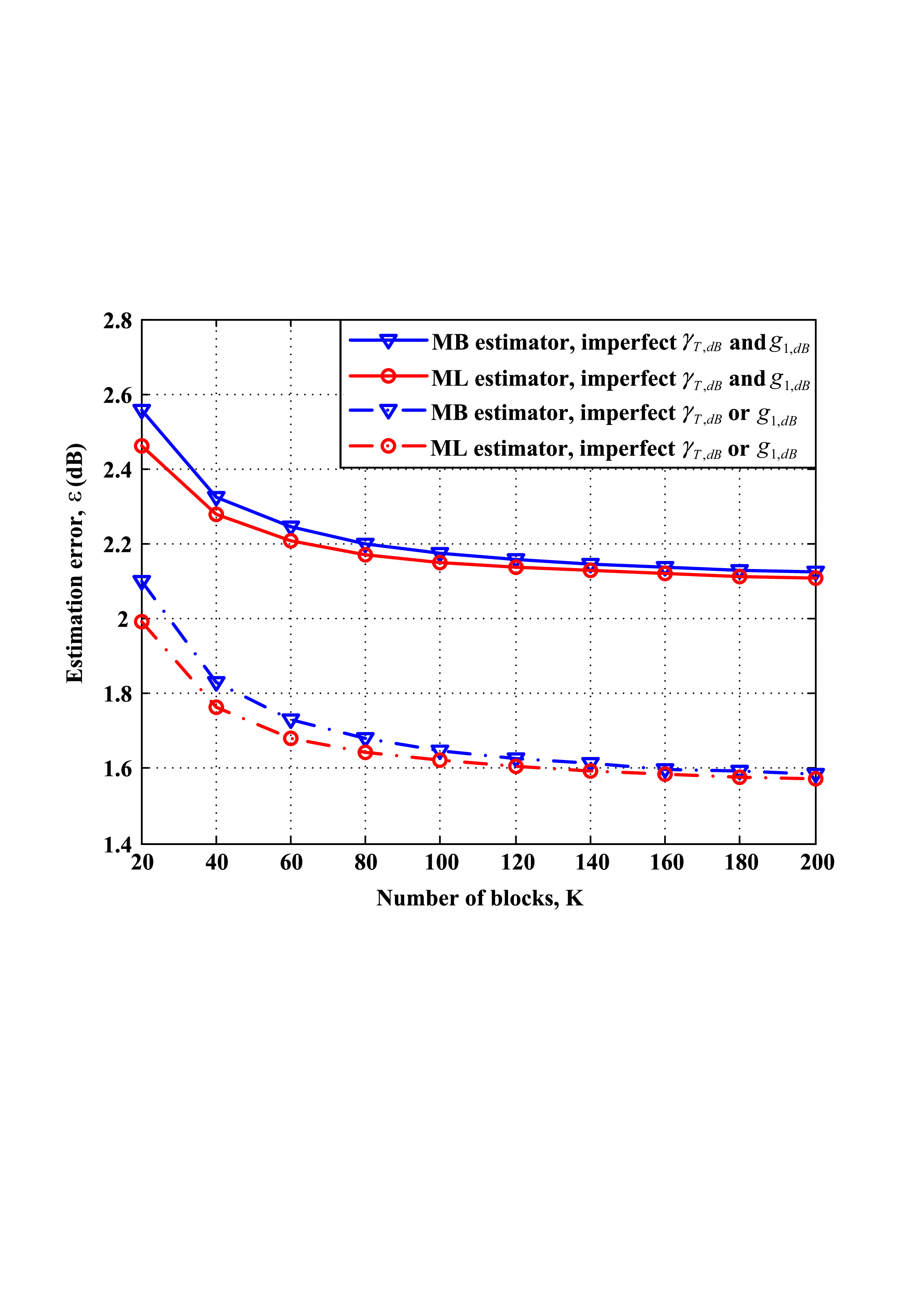}
            \caption{The estimation error $\varepsilon$ with imperfect information of $\gamma_{T,dB}$ or/and $g_{1,dB}$, versus the number $K$ of the measured SNRs at the CT. In particular, Case I denotes that the error of $\gamma_{T,dB}$ or $g_{1,dB}$ is uniformly distributed within $[-3, 3]$ dB. Case II denotes that the errors of both $\gamma_{T,dB}$ and $g_{1,dB}$ are uniformly distributed within $[-3, 3]$ dB.}
            \label{error_d1_gap}
        \end{figure}

Fig. 10 illustrates the estimation error $\varepsilon$ of $g_{0,dB}$ with imperfect information of $\gamma_{T,dB}$ or/and $g_{1,dB}$, versus the number $K$ of the measured SNRs at the CT. From this figure, $\varepsilon$ of both estimators in Case I and Case II is increased by around $0.8$ dB and $1.3$ dB respectively, compared with the case that both $\gamma_{T,dB}$ and $g_{1,dB}$ are perfect. Besides, as $K$ increases, $\varepsilon$ in both cases decreases. This indicates that, by increasing the number of the measured SNRs at the CT, the impacts of imperfect $\gamma_{T,dB}$ or/and $g_{1,dB}$ on the estimation of $g_{0,dB}$ can be reduced. In this way, we demonstrates the flexibility of the proposed two estimators.

\section{Conclusions}
In this paper, we proposed two estimators for the CT to estimate the primary channel gain, such that the CT is able to calculate the interference temperature of the primary system and achieve SS. In particular, we enabled the CT to sense primary signals and developed two estimators to obtain the primary channel gain. Numerical results show that the estimation errors of the ML estimator and the MB estimator can be as small as $0.6$ dB and $0.7$ dB, respectively. Besides, the ML estimator outperforms the MB estimator in terms of the estimation error if the SNR of the sensed primary signal at the CT is no smaller than $4$ dB. Otherwise, the MB estimator is superior to the ML estimator from the aspect of both computational complexity and estimation accuracy.


\section{Appendix}

\subsection{Derivation of the interference temperature with $g_0$}

Consider the scenario that the PT is transmitting data to the PR with a target SNR $\gamma_T$. If the PR is interfered by cognitive signals and the target SNR $\gamma_T$ at the PR cannot be satisfied, even when the PT works with the maximum transmit power $p_{\max}$, an outage of the primary transmission is claimed. In general, a specific $\gamma_T$ corresponds to a certain wireless service in the primary system and requires a preset maximum outage probability $\Theta$. Thus, an interference temperature $p_I$ is imposed on the transmit power of the CT to protect the primary transmission. Mathematically, we have \cite{Interference_T_C}
\begin{equation}
\Pr\left\{\frac{p_{\max}g_0|h_0|^2}{\sigma^2+p_I}<\gamma_T\right\} = \Theta,
\label{P_I_0}
\end{equation}
where $h_0$ denotes the small-scale block fading coefficient and $|h_0|$ follows a Rayleigh distribution with unit mean, and $g_0$ is the large-scale channel gain between the primary transceivers, and $\sigma^2$ represents the power of the AWGN. From (\ref{P_I_0}), the interference temperature $p_I$ can be calculated as
\begin{equation}
p_I=\frac{-p_{\max}g_0\ln(1-\Theta)}{\gamma_T}-\sigma^2.
\label{P_I}
\end{equation}

From (\ref{P_I}), the interference temperature $p_I$ is related to $p_{\max}$, $\gamma_T$, $\Theta$. In particular, $p_{\max}$ is a typical value of a PT and can be known as a prior knowledge at the CT, and $\gamma_T$ can be known at the CT by observing the MCS of the primary signal \cite{D_Tse}, and $\Theta$ corresponds to a specific $\gamma_T$ and can be known by the CT once $\gamma_T$ is obtained, and $\sigma^2$ is the power of the AWGN and is also available at the CT. Thus, the CT is able to calculate $p_I$ with $g_0$, $p_{\max}$, $\gamma_T$, $\Theta$.

\subsection{Proof of Theorem 1}
When $K$ is odd, the sample median is $\gamma^s_{c, dB, \frac{1}{2}}=\bar \gamma _{c, dB}\left(\frac{K+1}{2}\right)$. Note that, for these $K$ measured SNRs $\bar \gamma _{c, dB}(k)$ ($1\leq k\leq K$), we have $\bar \gamma _{c, dB}(i) \leq \bar \gamma _{c, dB}(j)$ for $1\leq i \leq j \leq K$. Then, the lower bound and the upper bound of $\gamma^s_{c, dB, \frac{1}{2}}$ can be denoted as
\begin{equation}
\grave{\gamma}_{c, dB}= \bar \gamma _c\left(\frac{K-1}{2}\right)
\label{Lower_bound_odd}
\end{equation}
and
\begin{equation}
\acute{\gamma}_{c, dB}= \bar \gamma _c\left(\frac{K+1}{2}+ 1\right),
\label{Upper_bound_odd}
\end{equation}
respectively.

Among the $K$ measured SNRs $\bar \gamma _{c, dB}(k)$ ($1\leq k\leq K$), the probabilities that $\bar \gamma _{c, dB}(k)$ is smaller than or equal to $\grave{\gamma}_{c, dB}$ and $\acute{\gamma}_{c, dB}$ are
\begin{equation}
\Pr\left\{\bar \gamma _{c, dB}(k)\le \grave{\gamma}_{c, dB} \right\} = \frac{{\frac{{K - 1}}{2}}}{K} = \frac{1}{2} - \frac{1}{K}
\label{Lower_bound_probability_1}
\end{equation}
and
\begin{equation}
\Pr\left\{\bar \gamma _{c, dB}(k)\le \acute{\gamma}_{c, dB} \right\} = \frac{{\frac{{K + 1}}{2} + 1}}{K} = \frac{1}{2} + \frac{3}{{2K}},
\label{Upper_bound_probability_1}
\end{equation}
respectively.

When $K$ is even, the sample median is $\gamma^s_{c, dB, \frac{1}{2}}=\frac{\bar\gamma _{c, dB}\left(\frac{K}{2}\right)+\bar\gamma _{c, dB}\left(\frac{K}{2}+1\right)}{2}$. Then, the lower bound and the upper bound of $\gamma^s_{c, dB, \frac{1}{2}}$ can be denoted as
\begin{equation}
\grave{\gamma}_{c, dB}= \bar\gamma_c\left(\frac{K}{2}\right),
\label{Lower_bound_odd}
\end{equation}
and
\begin{equation}
\acute{\gamma}_{c, dB}= \bar\gamma _c\left(\frac{K}{2}+1\right),
\label{Upper_bound_odd}
\end{equation}
respectively.

Among the $K$ measured SNRs $\bar\gamma _{c, dB}(k)$ ($1\leq k\leq K$), the probabilities that $\bar \gamma _{c, dB}(k)$ is smaller or equal to $\grave{\gamma}_{c, dB}$ and $\acute{\gamma}_{c, dB}$ are
\begin{equation}
\Pr\left\{\bar\gamma _{c, dB}(k)\le \grave{\gamma}_{c, dB} \right\} = \frac{{\frac{{K}}{2}}}{K} = \frac{1}{2}
\label{Lower_bound_probability_2}
\end{equation}
and
\begin{equation}
\Pr\left\{\bar\gamma _{c, dB}(k)\le \acute{\gamma}_{c, dB} \right\} = \frac{{\frac{{K}}{2}}+1}{K} = \frac{1}{2}+\frac{1}{K},
\label{Upper_bound_probability_2}
\end{equation}
respectively.

Based on (\ref{Lower_bound_probability_1}), (\ref{Upper_bound_probability_1}), (\ref{Lower_bound_probability_2}), and (\ref{Upper_bound_probability_2}), for any $K$, we have
\begin{equation}
  \frac{1}{2} - \frac{1}{K} \le \Pr\left\{\bar\gamma _{c, dB}(k)\le \grave{\gamma}_{c, dB}\right\} \le \frac{1}{2}
  \label{Lower_bound_probability_3}
\end{equation}
and
\begin{equation}
\frac{1}{2} + \frac{1}{K} \le \Pr\left\{ \bar\gamma _{c, dB}(k)\le \acute{\gamma}_{c, dB} \right\} \le \frac{1}{2} + \frac{3}{{2K}},
\label{Upper_bound_probability_3}
\end{equation}
respectively.

When $K$ goes to the infinity, i.e., $K \to \infty$, (\ref{Lower_bound_probability_3}) and (\ref{Upper_bound_probability_3}) become
\begin{equation}
  \frac{1}{2} - {\mu _0} \le F_{\Gamma_c}\left( \grave{\gamma}_{c, dB}\right) \le \frac{1}{2}
\end{equation}
and
\begin{equation}
\frac{1}{2} + {\mu _0} \le F_{\Gamma_c}\left( \acute{\gamma}_{c, dB} \right) \le \frac{1}{2} + \frac{3}{2}{\mu_0},
\end{equation}
where $\mu_0$ is defined as $\mu_0=\underset{K \to \infty}{\lim}\frac{1}{K}$ and is an arbitrarily small and positive value.

Since the median $\gamma_{c, dB, \frac{1}{2}}$ satisfies $F_{\Gamma_c}\left( \gamma_{c, dB, \frac{1}{2}}\right) = \frac{1}{2}$, we have
\begin{equation}
\Pr\left\{ {F_{\Gamma_c}\left( \gamma_{c, dB, \frac{1}{2}} \right) - F_{\Gamma_c}\left( \grave{\gamma}_{c, dB}\right) < {\mu_0}} \right\} = 1
\label{Combine_1}
\end{equation}
and
\begin{equation}
\Pr\left\{ {F_{\Gamma_c}\left( \gamma_{c, dB, \frac{1}{2}} \right) - F_{\Gamma_c}\left( \acute{\gamma}_{c, dB}\right) > \frac{3\mu_0}{2}} \right\} = 1,
\label{Combine_2}
\end{equation}
respectively.

Combining (\ref{Combine_1}) and (\ref{Combine_2}), we obtain
\begin{equation}
\begin{split}
&\Pr\left\{-\frac{3\mu_0}{2}< F_{\Gamma_c}\left( \gamma_{c, dB, \frac{1}{2}} \right)-F_{\Gamma_c}\left( \acute{\gamma}_{c, dB}\right)\right. \\
&\quad \quad <\left. F_{\Gamma_c}\left(\gamma_{c, dB, \frac{1}{2}}\right)-F_{\Gamma_c}\left( \grave{\gamma}_{c,dB} \right)<\mu_0 \right\}=1
\end{split}
\label{three_SNR_comapre_1}
\end{equation}

Since $F_{\Gamma_c}(\gamma_{c, dB})$ is a continuous and monotonically increasing function of $\gamma_{c, dB}$, the following inequations hold,
\begin{align} \nonumber
& \quad F_{\Gamma_c}\left( \gamma_{c, dB, \frac{1}{2}} \right)-F_{\Gamma_c}\left( \acute{\gamma}_{c, dB}\right)\\ \nonumber
&\leq  F_{\Gamma_c}\left( \gamma_{c, dB, \frac{1}{2}} \right)-F_{\Gamma_c}\left( \gamma^s_{c, dB, \frac{1}{2}}\right) \\
&\leq F_{\Gamma_c}\left( \gamma_{c, dB, \frac{1}{2}} \right)-F_{\Gamma_c}\left( \grave{\gamma}_{c, dB}\right).
\label{three_SNR_comapre_2}
\end{align}

Based on (\ref{three_SNR_comapre_1}) and (\ref{three_SNR_comapre_2}), we obtain
\begin{equation}
\Pr\left\{ { - \frac{3}{2}{\mu_0} < F_{\Gamma_c}\left( \gamma_{c, dB, \frac{1}{2}} \right)-F_{\Gamma_c}\left( \gamma^s_{c, dB, \frac{1}{2}}\right) < {\mu_0}} \right\}=1.
\end{equation}

Then, we can always find an arbitrarily positive and small $\mu _1$ satisfying
\begin{equation}
\Pr\left\{ - \mu_1 <  \gamma_{c, dB, \frac{1}{2}}- \gamma^s_{c, dB, \frac{1}{2}} < {\mu_1} \right\} = 1.
\label{gamma_c_cB_approach}
\end{equation}

Based on (\ref{gamma_c_cB_approach}) and the relations between $g_{0,dB}$ and $\gamma_{c, dB}$ in (\ref{gamma_c_dB}), we can always find an arbitrarily positive and small $\mu$ satisfying
\begin{equation}
\Pr\left\{ - \mu < g_{0,dB} - \hat{g}_{0,dB} < \mu \right\} = 1,
\end{equation}
which can be rewritten as
\begin{equation}
\Pr\left\{ {\left| {g_{0,dB}} - \hat{g}_{0,dB} \right| < \mu} \right\} = 1.
\end{equation}

This completes the proof of Theorem 1.

\end{document}